\let\originalleft\left
\let\originalright\right
\renewcommand{\left}{\mathopen{}\mathclose\bgroup\originalleft}
\renewcommand{\right}{\aftergroup\egroup\originalright}
\newcommand{\bP}[2][]{\Pr\ifthenelse{\isempty{#1}}{}{_{#1}}\left[#2\right]}
\newcommand{\bE}[2][]{\mathop\mathbb{E}\ifthenelse{\isempty{#1}}{}{_{#1}}\left[#2\right]}
\newcommand{\bI}[2][]{\mathop\mathbb{I}\ifthenelse{\isempty{#1}}{}{_{#1}}\left[#2\right]}
\newcommand{\Var}[2][]{\mathbf{Var}\ifthenelse{\isempty{#1}}{}{_{#1}}\left[#2\right]}
\DeclareMathOperator*{\argmin}{arg\,min}
\newcommand{\tr}{{{\mathsf T}}}
\newcommand{\R}{\mathbb{R}}
\newcommand{\J}{\mathbf{J}}
\newcommand{\G}{\mathcal{G}}
\renewcommand{\d}{\mathrm{d}}
\newcommand{\revise}[1]{\textcolor{black}{#1}}
\newtheorem{theorem}{Theorem}
\newtheorem{proof}{Proof}
\begin{document}

\title{Predicting Strategic Energy Storage Behaviors}

\author{Yuexin Bian~\IEEEmembership{Student Member,~IEEE}, Ningkun Zheng,~\IEEEmembership{Student Member,~IEEE}, Yang Zheng~\IEEEmembership{Member,~IEEE}, \\ Bolun Xu,~\IEEEmembership{Member,~IEEE}, Yuanyuan Shi~\IEEEmembership{Member,~IEEE}
\thanks{Yuexin Bian, Yang Zheng, and Yuanyuan Shi are with the Department of Electrical and Computer Engineering, University of California San Diego.}
\thanks{Ningkun Zheng, and Bolun Xu are with the Department of Earth and Environmental Engineering, Columbia University.}
}



\maketitle

\begin{abstract}
Energy storage are strategic participants in electricity markets to arbitrage price differences. Future power system operators must understand and predict strategic storage arbitrage behaviors for market power monitoring and capacity adequacy planning. This paper proposes a novel data-driven approach that incorporates prior model knowledge for predicting the strategic behaviors of \revise{price-taker} energy storage systems. We propose a gradient-descent method to find the storage model parameters given the historical price signals and observations. We prove that the identified model parameters will converge to the true user parameters under a class of quadratic objective and linear equality-constrained storage models. We demonstrate the effectiveness of our approach through numerical experiments with synthetic and real-world storage behavior data. The proposed approach significantly improves the accuracy of storage model identification and behavior forecasting compared to previous blackbox data-driven approaches. 
\end{abstract}

\begin{IEEEkeywords}
Differentiable Optimization, Energy Storage, Electricity Markets
\end{IEEEkeywords}

\section{Introduction}\label{Intro}


Energy storage is pivotal in balancing electricity demand and supply fluctuations in future decarbonized power systems, and tremendous investments have been made in both utility-scale and behind-the-meter (BTM) energy storage in recent years. The utility-scale energy storage capacity in the US has tripled in 2021, reaching 7.8 GW  storage as of Oct 2022, and is projected to reach 30~GW by 2025~\cite{us2021form, eia2022preliminary}. Additionally, the deployment rate of BTM energy storage is expected to exceed utility-scale energy storage, with the installed residential BTM energy storage capacity in the US increasing by 67\% year on year in 2022 Q2, mostly paired with distributed solar PV~\cite{fisher2017emissions, USESM, barbose2021behind}. Electricity markets are also removing barriers for energy storage participation, with Federal Energy Regulatory Commission (FERC) Order 841 ruling that all power system operators must allow storage to participate in all market services~\cite{federal2018electric} and FERC Order 2222~\cite{cartwright2020ferc} extending the requirement to BTM storage. 
As storage capacity continues to surge, price arbitrage in the energy market, buying electricity at a low price and selling it at a high price, is becoming one of the major revenue resources for energy storage owners to participate in electricity markets and earn revenue~\cite{us2021form}.

The deregulation of electricity markets and the incentive mechanisms for investing in renewable energy resources have made energy storage devices a strategic tool for profit in these markets. These devices, operated primarily by private entities, include both behind-the-meter (BTM) and utility-scale energy storage capacities~\cite{eia2020battery}. Typically, battery energy storage devices have a storage duration of one to four hours~\cite{zakeri2015electrical}, and their operators must carefully design their charging and discharge strategies to capture price fluctuations and maximize profits.

\revise{Energy storage devices play a critical role in enhancing power system flexibility. However, their strategic behavior can increase market volatility and undermine system robustness against load balance uncertainties~\cite{roozbehani2012volatility}. An accurate prediction of energy storage strategic behaviors is essential for market efficiency and to address concerns around market power~\cite{CAISO_ES_Proposal}. System operators can leverage the proposed algorithm for modeling the behavior of energy storage units and integrating them into the dispatch optimization process. Moreover, system operators can design new tariffs that align with the storage unit's for-profit interests with certain system-level objectives~\cite{fridgen2018one, zafar2018prosumer}, such as reducing peak demand power or reducing carbon emissions. In particular, a system operator can formulate the market tariff design as a bi-level optimization problem that optimizes for social welfare objectives while ensuring that the storage behavior aligns with the identified model. Last but not least, identifying the behavior models of storage market participants holds potential for market operators that are responsible for preventing market power abuse. One recent work~\cite{liang2023data} on this direction uses the
proposed approach for marginal offer price recovery of conventional generation units in the wholesale power market.}

\revise{This paper is an extended version of our earlier conference paper~\cite{bian2022demand}. We have made significant extensions by proposing sequential convex programming (SCP) to extend the gradient-based approach for identifying strategic energy storage behaviors with general objective functions, which is important for practical applications where utility functions are non-quadratic, or system operators have no prior knowledge about the parametric forms of agents' utility functions.} The main contributions of this work are as follows:
\begin{itemize}
    \item 
    \revise{We model strategic energy storage behaviors as a general agent decision-making optimization model. We then introduce a novel gradient-based approach for identifying the generic agent model, which can be used to forecast strategic energy storage behaviors accurately.}
    \item  \revise{We provide a formal convergence guarantee for convex quadratic battery agent models with linear equality constraints and discuss the local convergence property for problems with inequality constraints.}  
    \item \revise{We validate the proposed algorithm for predicting synthetic quadratic and generic energy storage behavior models and demonstrate its applicability on real-world datasets. Our approach accurately forecasts the strategic battery price arbitrage behaviors and outperforms simple heuristics, optimization-based methods, and competing machine-learning approaches with deep neural networks.}
\end{itemize}

We organize the remainder of the paper as follows. Section~\ref{LR} reviews related literature. Sections~\ref{Formu} and~\ref{Algo} present the problem formulation and solution algorithm, respectively. Section~\ref{CS} describes case studies with synthetic data and real-world applications. Section~\ref{Conclu} concludes this paper with future directions.

\section{Background and Related Work}\label{LR}



\subsection{\revise{Energy Storage Market Participation}}
\revise{Under FERC Order 841~\cite{federal2018electric}, energy storage can participate in the market through two options: self-scheduling and economic bids. On the one hand, Self-scheduling involves submitting offers indicating discharge and charge quantities for each time interval during the upcoming operating days. On the other hand, economic bids involve submitting both prices and quantities to express the willingness to discharge energy when the market price exceeds the proposed discharging price, or to charge energy when the market price falls below the proposed charging price. 
Economic bids offer greater efficiency compared to self-scheduling, especially in the presence of price uncertainty~\cite{mohsenian2015optimal, bhattacharjee2022energy}.}

\revise{Behind-the-meter (BTM) energy storage participation refers to the integration of energy storage systems (ESSs) on the customer side of the electric meter, typically at residential, commercial, or industrial buildings~\cite{rezaeimozafar2022review}. Due to their nature of not being directly cleared in the wholesale market, BTM energy storage systems often utilize a price response strategy instead of participating by self-scheduling or economic bidding. Using price response strategy, BTM energy storage owners can make charging and discharging decisions based on real-time price signals, rather than submitting bids in advance. The adoption of BTM energy storage is aimed at providing customers with benefits such as electricity bill savings and demand-side management~\cite{irena2019behind}. In recent years, there has been a significant increase in the installed capacity of BTM energy storage systems, coinciding with the growth in distributed generators~\cite{mcilwaine2021state}.}

\revise{Therefore, both economic bids and behind-the-meter (BTM) energy storage systems make price-responsive decisions regarding charging and discharging activities. In our work, we model these strategic energy storage behaviors
as price-takers, with the goal of faithfully capturing the characteristics of real-world energy storage participation behavior.}
\revise{\subsection{Market Power Mitigation}}
\revise{With the surging installed storage capacity and the emergence of strategic behaviors, the need to alleviate utility-scale energy storage market power has become a new challenge in system design. 
While there has been extensive research on market power mitigation for conventional generators, which relies primarily on fuel cost and heat rate curve parameters to determine the optimal bidding strategy, there is a growing recognition of the need to consider market power mitigation for energy storage systems as well~\cite{caiso2022}. The two primary approaches to market power mitigation in the US, the structural approach and the conduct and impact approach, both require market power monitoring and  marginal cost estimation of generation resources~\cite{graf2021market}. 
Energy storage systems differ from conventional generators as they need to account for the opportunity costs associated with charging and discharging. 
Previous works have discussed the effect of energy storage market power on storage owner profit and social welfare~\cite{mohsenian2015coordinated,contereras2017energy}. Therefore, it is important to develop market power mitigation methods that consider the unique characteristics of energy storage systems.}

\revise{
This paper can provide a fundamental basis for market regulation design by identifying the strategic behavior of utility-scale energy storage systems. Specifically, we assume that energy storage agents participate in electricity markets with the aim of minimizing their costs while satisfying the system constraints. To achieve this goal, we propose a novel approach based on historical market participation data to identify storage agent models and associated parameters, including its marginal operation cost (i.e., degradation cost), power/energy capacity limits, and efficiency. 
By identifying the marginal operation cost, it helps facilitate market behavior monitoring and help identify instances of market power abuse. One recent work~\cite{liang2023data} on this direction is using the proposed approach for  marginal offer price recovery of conventional generation units in wholesale power markets.}

\revise{\subsection{Price-responsive Demand Response Behaviors Forecasting}}
Our paper focus on price-responsive behavior. Existed methods can be separated into two categories. The first category lies in model-free data-driven approaches. For instance, ~\cite{nghiem2017data} introduced a Gaussian model to predict the demand response of buildings in response to price signals and~\cite{vuelvas2019novel} used a two-layer neural network to characteristic the consumer behaviors under an incentive-based demand response program.
The second category utilizes the agent model knowledge, which aims to predict the energy decision using the assumed model. These works typically employ inverse optimization~\cite{ fernandez2021forecasting,kovacs2021inverse,lu2018data}. The inverse optimization formulates the agent identification problem as a bi-level optimization problem to find the best model parameters using the price and user behavior data. The upper-level problem minimizes the mean absolute error of the predicted and actual user behavior, and the lower-level problem models the user decision-making model for strategic behavior prediction. %

\revise{Optimization-based methods incorporate agent model knowledge and achieve better prediction accuracy than black-box data-driven models with limited data. However, data-driven methods are easier to adapt to different applications, while optimization-based methods can struggle with large datasets and are not robust to noisy data. To leverage the benefits of both approaches, this paper focuses on combining the model-based and model-free methods for joint battery model identification and behavior prediction. }

\begin{figure*}
    \centering
    \includegraphics[trim = 0mm 0mm 0mm 0mm, clip, width=1.8\columnwidth]{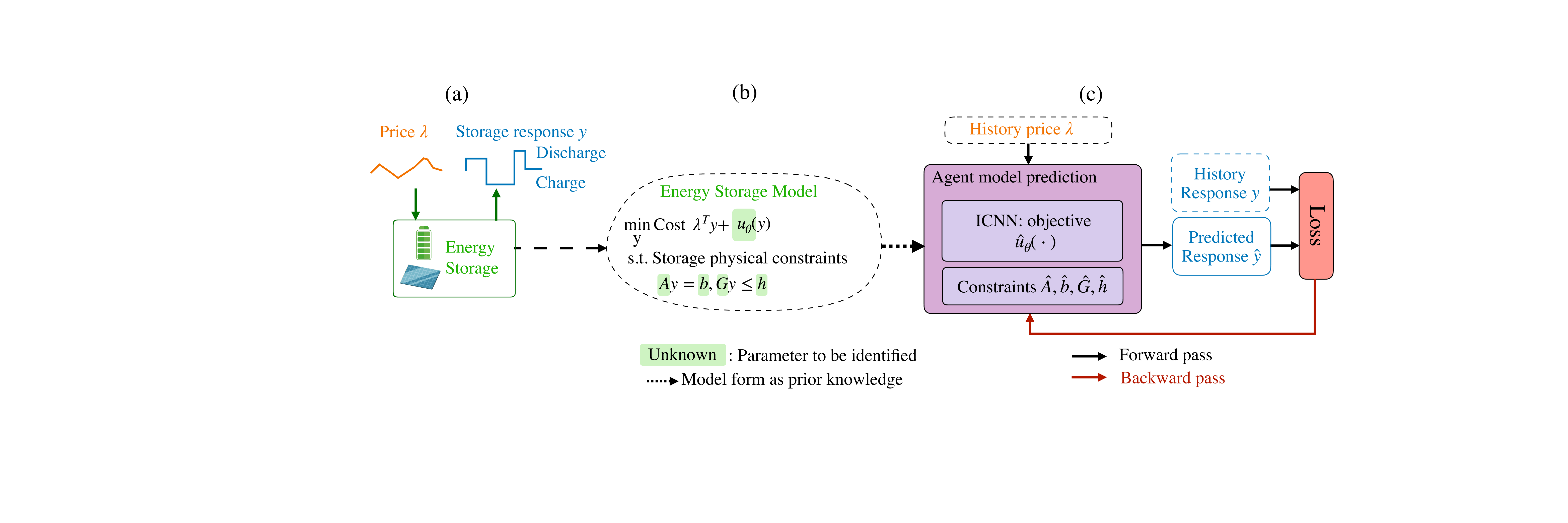}
    \caption{Diagram of the proposed energy storage agent model identification and forecasting framework. Prior knowledge of the energy storage agent is modeled as an optimization problem, in which the objective is to minimize the energy cost and degradation cost, subject to storage physical constraints. Parameters in the energy storage models are unknown to the system operator. We use a gradient-based method to update and identify the parameters in the energy storage model, to minimize the difference between the predicted storage response and the actual response.} 
    \label{fig:frame}
\end{figure*}

\section{Problem Formulation}\label{Formu}



\revise{Fig. \ref{fig:frame} provides an overview of the proposed model-based energy storage behavior forecast approach. We model strategic energy storage behaviors as a general agent decision-making optimization model. Specifically, we assume each energy storage participant conducts private optimization to determine their storage response, with the goal of minimizing the operation cost (charging electricity cost plus storage degradation cost) while satisfying physical constraints. 
The model is trained on minimizing the difference between the predicted storage response and the observed storage response on the historical data, which returns the identified storage utility function and physical parameters as by-products of the prediction model.}

\subsection{Energy Storage Demand Response Model}\label{sec:energy_model}
We start by considering a price-responsive energy storage agent model subject to a time-varying electricity price $\lambda \in \mathbb{R}^T$ and a disutility cost function,
\begin{subequations}\label{eq:agent_battery}
\begin{align}
  \min_{p,d}\quad & \sum_{t=1}^{T} \lambda_t (p_t-d_t) + u_\theta(p,d) \label{eq:agent_obj}\\
  \text{subject to} \quad & \underline{P} \leq d_t, p_t \leq \overline{P}\,, \label{eq:agent_battery_contr1}\\
  \quad & e_t = e_0 + \sum_{\tau=1}^t p_{\tau}\eta_c - \frac{d_\tau}{\eta_d}, \label{eq:agent_battery_contr2}\\
  \quad & 
  \underline{E_m} \leq e_t \leq \overline{E_m}\,,  \label{eq:agent_battery_contr3}
\end{align}
\end{subequations}
where decision variables $p_t, d_t$ are the energy storage charging and discharging power, $e_0, e_t$ is the relative state of charge level at initial stage and operating time $t$, $u_\theta(\cdot)$ models the storage disutility cost, which is a generic function of the storage charging/discharging power. Here $\underline{P}, \overline{P}, \underline{E_m}, \overline{E_m}$ are the maximum and minimum storage power and energy constraints, and $\eta_c, \eta_d$ is the charging efficiency and discharging efficiency. \revise{The expression $\sum_{\tau=1}^t p_{\tau}\eta_c - \frac{d_\tau}{\eta_d}$ represents the accumulated relative state of charge level from time 1 to time $t$}. We assume $e_0$ is known. \revise{Here parameters of each energy storage agent are $\Theta = \{\theta, \underline{P}, \overline{P}, \underline{E_m}, \overline{E_m}, \eta_c, \eta_d\}$, where $\theta$ are parameters in the agent's disutility function $u_{\theta}(\cdot)$.}
We re-write \eqref{eq:agent_battery_contr2} and \eqref{eq:agent_battery_contr3} as,
\begin{equation}\label{eq:storage_contr2_reformulation}
    \underline{E_m}\eta_d \leq e_0 \eta_d + \sum_{\tau=1}^{t} p_{\tau}\eta_c\eta_d - d_{\tau} \leq \overline{E_m}\eta_d,
\end{equation}
where $\eta_c\eta_d, {(\underline{E_m} -e_0) \eta_d, (\overline{E_m} - e_0)\eta_d}$ are the new parameters. \eqref{eq:storage_contr2_reformulation} is a linear constraint and we can recover the original parameter values ${\underline{E_m}, \overline{E_m}, \eta}$ after the learning process if charging efficient and discharging efficiencies are symmetric, i.e., $\eta_c=\eta_d = \eta$, or we can model the round-trip efficiency $\eta_c \eta_d$ as one parameter rather than modeling the charging and discharging efficiency separately.



To be more concrete, let us consider the following example within this framework.

\noindent\emph{Example 1: Price Arbitrage with Quadratic Storage Degradation Cost.} Here we consider $u_\theta$ to be parametric, where cost function is a quadratic function.
\begin{subequations}\label{eq:agent_battery1}
\begin{align}
  \min_{p,d}\quad & \sum_{t=1}^{T} \lambda_t (p_t-d_t) + c_1d_t + c_2d_t^2 \label{eq:quad_battery_obj1}\\
  \text{subject to} \quad & \eqref{eq:agent_battery_contr1}, \eqref{eq:storage_contr2_reformulation}
\end{align}
\end{subequations}

\noindent\emph{Example 2: Price Arbitrage with SoC-dependent Storage Degradation model}.
Here we consider $u_{\theta}$ to be an SoC-related storage degradation cost~\eqref{eq:agent_battery2}, which includes a quadratic degradation cost with respect to the charging power and an SoC-cost related to the SoC level.
The SoC-cost is lower when operating around $50\%$ SoC level, and higher when SoC approaches $0\%$ or $100\%$. 
\begin{subequations}\label{eq:agent_battery2}
\begin{align}
  \min_{p,d}\quad & \sum_{t=1}^{T} \lambda_t (p_t-d_t) + c_1(e_{t}-0.5)^2 + c_2d_t^2 \label{eq:soc_battery_obj1}\\
  \text{subject to} \quad & ~\eqref{eq:agent_battery_contr1}, \eqref{eq:storage_contr2_reformulation}
\end{align}
\end{subequations}
The objective functions in both examples, e.g., \eqref{eq:quad_battery_obj1} and \eqref{eq:soc_battery_obj1} can be captured by the general formulation in \eqref{eq:agent_obj}, subject to the physical constraints of charging/discharging power and energy limits. 
\revise{\textbf{Remark 1.} In this paper we focus on predicting energy storage price-responsive behaviors with the agent behavior model~\eqref{eq:agent_battery} specifies the state of the charge dynamics in \eqref{eq:agent_battery_contr2}, and maximum power/energy capacity \eqref{eq:agent_battery_contr1}-\eqref{eq:agent_battery_contr3} for storage units. It is worth noting that the proposed algorithm can be adapted to other price-responsive demand response agent modeling and forecast problems with affine constraints and generic convex cost functions, embedding prior knowledge about the specific applications via certain parametric objective/constraint forms. }

\subsection{Energy Storage Model Identification}
Here we formulate the energy storage model identification as an optimization problem.
We consider the energy storage agent conducts a \emph{private} optimization \eqref{eq:agent_battery} to decide its charging and discharging power $p$ and $d$, with respect to a price signal $\lambda$. The true disutility function and parameters in the agent decision model, i.e., $\bar{\Theta} = \{\theta, \underline{P}, \overline{P}, \underline{E_m}, \overline{E_m}, \eta_c, \eta_d\}$ are unknown to the system operator. Instead, the system operator only has access to a collection of $N$ historical price and agent response data pairs. 
The goal is to learn the agent behavior model (i.e., the parameter $\Theta$) that minimizes the difference between the predicted and the actual storage response, 
\begin{equation}\label{eq:identification}
\begin{aligned}
    \min_{\Theta}\quad &L := \frac{1}{N} \sum_{i=1}^N \left(\|p_{i}^\star-p_{i}\|^2 + \|d_{i}^\star-d_{i}\|^2\right)\\
    \text{where}\quad & p_i^\star, d_i^\star \in \arg\min\; \revise{\text{total storage cost}~\eqref{eq:agent_obj} }\\
    &\qquad\quad\revise{ \text{s.t. storage agent constraints}~\eqref{eq:agent_battery_contr1}, \eqref{eq:storage_contr2_reformulation}}
\end{aligned}
\end{equation}
in which the training~/~identification inputs include
\begin{itemize}
    \item $\lambda_{i}$ is the price signal for sample scenario $i$;
    \item $p_i, d_i$ is the actual storage response;
\end{itemize}
and the problem has the following variables
\begin{itemize}
    \item $\Theta$ represent learnable parameters in the storage agent model; 
    \item $p_i^\star, d_i^\star$ is the predicted storage response based on estimated model $\Theta$. 
\end{itemize}

\section{Algorithm Design}\label{Algo}

We propose a gradient-descent approach to solve the generic energy storage agent identification problem~\eqref{eq:identification}. We start by presenting the gradient-descent training algorithm. Then we discuss how to derive the gradients of the optimization problem where the objective of the energy storage agent is in the quadratic function form; and discuss how we extend the method when the agent model is a generic one. Finally, we provide a convergence analysis for the gradient-based agent identification algorithm under the convex quadratic energy storage agent models with only linear equality constraints.

\subsection{Gradient-based Approach for Storage Model Identification}\label{sec:gradientformodel}
To solve the energy storage agent identification problem in \eqref{eq:identification}, we propose a gradient-descent approach.  
Without loss of generalization, we can present the storage model with a generic objective function and affine constraints. 
\begin{subequations}
\label{eq:agent_model2}
\begin{align}
   \min_{y}\quad &c(y, \lambda) :=  \lambda^\top y + u_\theta(y) \label{eq:agent2_obj}\\
   \text{subject to} \quad & Ay = b \,, \label{eq:agent2_contr1}\\
   & Gy \leq h \,,\label{eq:agent2_contr2}
\end{align}
\end{subequations}
where $\lambda = \begin{bmatrix}\lambda_1, \ldots, \lambda_T, -\lambda_1, \ldots, -\lambda_T \end{bmatrix}^\top \in \mathbb{R}^{2T}$, $\lambda_t$ denoting the time-of-use price signal at time step $t$, $y = \begin{bmatrix}p_1, \ldots, p_T, d_1, \ldots, d_T\end{bmatrix}^\top \in \mathbb{R}^{2T}$ is the decision variable denoting the storage response to the price signal. Matrices $A \in \mathbb{R}^{m\times 2T}, b \in \mathbb{R}^m, G \in \mathbb{R}^{p \times 2T}, h\in \mathbb{R}^{p}$ define the collection of equality and inequality constraints in the agent model. $u_\theta(\cdot)$ models the storage disutility cost. We want to learn $\Theta = \{\theta, A, b, G, h\}.$
Then loss in~\eqref{eq:identification} can be written as
\begin{equation}
    L := \frac{1}{N}\|y_i^\star - y_i\|^2
\end{equation}
At iteration $t$, 
\begin{equation}\label{weight_f2}
    \Theta^{(t+1)} \leftarrow \Theta^{(t)} - \eta \sum_{i=1}^{N} \frac{\partial L}{\partial y_i^\star}
    \frac{\partial y_i^\star}{\partial \Theta} \,, 
\end{equation}
where $\eta$ is the learning rate; 
$\frac{\partial L}{\partial y_i^\star}$ is the gradient of the loss function with respect to the storage response forecast; and $\frac{\partial y_i^\star}{\partial \Theta}$ is the gradient of the storage response forecast with respect to the optimization problem parameters, evaluated at $\Theta^{(t)}$. 

Computing the derivative of the loss function $L$ with respect to storage response $y^\star$, i.e., $\frac{\partial L}{\partial y^\star}$ is trivial. The main challenge is in computing the gradient of the optimal solution $y^{\star}$ with respect to the problem parameters $\Theta$, i.e., $\frac{\partial y^\star}{\partial \Theta}$, which will be detailed in the next two subsections.

\revise{\textbf{Remark 2.} In our paper, we use a gradient descent approach over other descent-based techniques, such as Newton or Quasi-Newton, primarily because deriving the first-order gradient involves differentiating KKT conditions, which requires solving a costly Jacobian-based equation. It is more computationally expensive to compute the Hessian matrix and its inverse needed for second-order methods. However, we acknowledge that it could be an interesting direction for future work to investigate the use of second-order optimization methods.}
\subsection{Special Case: Differentiating Quadratic Storage Models}
\label{subsect:Storage_optnet}

To begin with, let us consider a quadratic energy storage agent model~\eqref{eq:quad_agent_model}, where the objective function $u_{\theta}(y) = q^\top y + y^\top Q y$ thus the parameters to be identified can be instantiated as $\Theta = \{q, Q, A, b, G, h\}$.
\begin{subequations}
\label{eq:quad_agent_model}
\begin{align}
   \min_{y}\quad &c(y, \lambda) :=  \lambda^\top y + q^\top y + y^\top Q y \label{eq:quad_agent_obj}\\
   \text{subject to} \quad & Ay = b \,, \label{eq:quad_agent_contr1}\\
   & Gy \leq h \,,\label{eq:quad_agent_contr2}
\end{align}
\end{subequations}
First, we claim that the strong duality holds if our energy storage agent model~(\ref{eq:quad_agent_model}) is strictly feasible. The strong duality holds if the problem is a convex problem and strictly feasible~\cite{boyd2004convex}. The  agent model~\eqref{eq:quad_agent_model} is convex because it has  a convex objective and affine constraints. The problem models the storage response, where $\mathrm{domain}(y)=\mathbb{R}^{2T}$. Under the storage identification and forecast setting, it is reasonable to assume that the energy storage agent model is strictly feasible, that is, there exists 
$y \in \mathbb{R}^{2T}$ 
such that $Ay=b,Gy < h$. Then, strong duality holds for~\eqref{eq:quad_agent_model} since Slater's condition holds. 

Since strong duality holds for \eqref{eq:quad_agent_model}, we write out its Karush–Kuhn–Tucker (KKT) conditions, which are sufficient and necessary for optimality, 
\begin{subequations}
\begin{align}
    \lambda + q +  Q y^\star  
    + A^\top\nu^\star + G^\top\mu^\star = 0 \\
    Ay^\star -b = 0 \\
    D(\mu^\star)(Gy^\star - h) = 0,
\end{align}
\end{subequations}
where $D(\cdot)$ creates a diagonal matrix from a vector, $\lambda$ is a vector $\lambda = [\lambda_1, \ldots, \lambda_T, -\lambda_1, \ldots, -\lambda_T]^\tr \in \R^{2T}$, $y^\star = [y^\star_1, y^\star_2, \ldots, y^\star_{2T}]^\tr \in \R^{2T}$ is the optimal primal solution, and $\nu^\star, \mu^\star$ are the optimal dual variables on the equality constrains and inequality constraints. 

Following the method in \cite{bian2022demand},  we take the total derivatives of these conditions and put them in a compact matrix form,
\begin{equation}
\begin{aligned}
&\underbrace{\begin{bmatrix}
Q  & G^\top & A^\top \\
D(\mu^\star)G & D(Gy^\star-h) & 0 \\
A & 0 & 0
\end{bmatrix}}_{K}
\begin{bmatrix}
\d y \\ \d \mu \\ \d \nu
\end{bmatrix}
= 
\\ &\qquad \qquad -
\begin{bmatrix}
\d Q y^\star + \d \lambda + \d q + \d G^\top \mu^\star + \d A^\top \nu^\star \\
D(\mu^\star)\d Gy^\star - D(\mu^\star)\d h \\
\d Ay^\star - \d b
\end{bmatrix},
\end{aligned}
\label{eq:kkt}
\end{equation}
and we can obtain 
the derivatives $\frac{\partial y^*}{\partial Q}$, $\frac{\partial y^*}{\partial q}$, $\frac{\partial y^*}{\partial b}$, $\frac{\partial y^*}{\partial h}$, $\frac{\partial y^*}{\partial A}, \frac{\partial y^*}{\partial G}$,
\begin{subequations}\label{eq:linear_system}
\begin{align}
    &\begin{bmatrix}\frac{\partial y^*}{\partial Q} & \frac{\partial \mu^*}{\partial Q} & \frac{\partial \nu^*}{\partial Q}\end{bmatrix}^\tr = -K^{-1}\begin{bmatrix}
    \frac{\d Q}{\d Q}y^\star & 0 & 0 
    \end{bmatrix}^\tr,\\
    &\begin{bmatrix}\frac{\partial y^*}{\partial q} & \frac{\partial \mu^*}{\partial q} & \frac{\partial \nu^*}{\partial q}\end{bmatrix}^\tr = -K^{-1}\begin{bmatrix}I & 0 & 0\end{bmatrix}^\tr, \\ 
    &\begin{bmatrix}\frac{\partial y^*}{\partial h} & \frac{\partial \mu^*}{\partial h} & \frac{\partial \nu^*}{\partial h}\end{bmatrix}^\tr = -K^{-1}\begin{bmatrix}0 & -D(\mu^\star) & 0\end{bmatrix}^\tr,  \\
    &\begin{bmatrix}\frac{\partial y^*}{\partial b} & \frac{\partial \mu^*}{\partial b} & \frac{\partial \nu^*}{\partial b}\end{bmatrix}^\tr = -K^{-1}\begin{bmatrix}0 & 0 & -I\end{bmatrix}^\tr, \\
    &\begin{bmatrix}\frac{\partial y^*}{\partial A} & \frac{\partial \mu^*}{\partial A} & \frac{\partial \nu^*}{\partial A}\end{bmatrix}^\tr = -K^{-1}\begin{bmatrix}\frac{\d A^T}{\d A}\nu^\star & 0 & \frac{\d A}{\d A}y^\star\end{bmatrix}^\tr, \\
    &\begin{bmatrix}\frac{\partial y^*}{\partial G} & \frac{\partial \mu^*}{\partial G} & \frac{\partial \nu^*}{\partial G}\end{bmatrix}^\tr = -K^{-1}\begin{bmatrix}\frac{\d G^T}{\d G}\mu^\star & y^{\star T} \otimes D(\mu^\star)& 0\end{bmatrix}^\tr,
\end{align}
\end{subequations}
where $\frac{\d Q}{\d Q} \in \R^{2T \times 2T \times 2T \times 2T}$, $\frac{\d A^T}{\d A} \in \R^{2T \times m \times 2T \times m}$, $\frac{\d A}{\d A} \in \R^{m \times 2T \times m \times 2T}, \frac{\d G^T}{\d G} \in \R^{2T \times p \times p \times 2T}$.
Note that the derivative of a parameter with respect to itself (e.g., $\frac{\d q}{\d q}$) is an identity matrix while the derivative to different parameters (for example, $\frac{\d q}{\d b}$) is zero.

The training process is summarized in Algorithm~\ref{alg:train}. For a given training dataset $\{\lambda_{i}, y_i\}$, $\lambda_i$ is the processed price signal $\lambda_i \in \R^{2T}$ and $y_i$ is observed storage response $y_i \in \R^{2T}$ for sample $i$. In the forward pass, we can get the estimated energy storage agent decision by we obtain $y_i^\star = \argmin_y c(y; \lambda_i, \Theta^{(t)})$ where $\Theta^{(t)}$ is the model parameter estimation at iteration $t$, subsequently, compute the loss function $L = \frac{1}{N} \sum_{i=1}^{N} \left\|y_i^\star-y_i\right\|^2$. In the backward pass, we need to compute the derivative $\frac{\partial L}{\partial \Theta}$ and update the model estimate $\Theta$ via gradient descent.
\begin{algorithm}[t]
\caption{Quadratic Storage Model Identification}
\label{alg:train}
\begin{algorithmic}
\Require Dataset $D = \{\lambda_i, y_i\}, i=1, \ldots, N$.
\Ensure $\theta^{(0)}$ \Comment{initial parameter of storage model}
\For{$ite = 0, \ldots, T-1$}
\State \textbf{sample} batch $B = \{(\lambda_i, y_i), i = 1, \ldots, |B|\}$ from $D$
\State \textbf{compute} $y_i^\star = \argmin_y c(y;\lambda_i,\theta^{(t)}), i = 1,\ldots, |B|$
\State \textbf{Evaluate} the loss function $L = \frac{1}{|B|} \sum_{i=1}^{|B|} \left\|y_i^\star-y_i\right\|^2$
\State \textbf{update} $\theta^{(ite)}$ with $\nabla_{\theta^{(ite)}} L$ using ~\eqref{weight_f2}, \eqref{eq:linear_system}:
$$\theta^{(ite+1)} \leftarrow \theta^{(ite)} - \eta \sum_{i=1}^{|B|} \frac{\partial L}{\partial y_i^\star}  \frac{\partial y_i^\star}{\partial \theta}$$
\EndFor
\end{algorithmic}
\end{algorithm}

\begin{algorithm}[t]
\caption{Generic Storage Model Identification}\label{alg:train2}
\begin{algorithmic}
\Require Dataset $D = \{\lambda_i, y_i\}, i=1, \ldots, N$.
\Ensure $\Theta^{(0)}$ \Comment{initial parameter of storage model}
\For{$ite = 0, \ldots, T-1$}
\State \textbf{sample} batch $B = \{(\lambda_i, y_i), i = 1, \ldots, |B|\}$ from $D$
\State \textbf{form the convex approximation of storage model}
\State Initialize $y^{(1)}$
\For{k = 1 to [converged]}
\State Approximate $\hat{u}_{\theta}^{(k)}$ around $y_i^{(k)}, \forall i \in B$
\State Obtain the solution $y_i^{(k+1)}, \forall i \in B$ by solving the approximated convex quadratic storage Storage problem~\eqref{eq:agent_model_approximate}
\EndFor
\State \textbf{compute} $y_i^\star$ by solving the final convex approximation $\forall i \in B$
\State \textbf{evaluate} the loss function $L = \frac{1}{|B|} \sum_{i=1}^{|B|} \left\|y_i^\star-y_i\right\|^2$
\State \textbf{update} $\Theta^{(ite)}$ with $\nabla_{\Theta^{(ite)}} L$ using ~\eqref{weight_f2}, \eqref{eq:linear_system} and \eqref{eq:derivative2}:
$$\Theta^{(ite+1)} \leftarrow \Theta^{(ite)} - \eta \sum_{i=1}^{|B|} \frac{\partial L}{\partial y_i^\star} \frac{\partial y_i^\star}{\partial \Theta}$$
\EndFor
\end{algorithmic}
\end{algorithm}

\subsection{Differentiating Generic Storage Models}
While the quadratic storage model is a powerful model in capturing the degradation cost~\cite{trippe2014charging}\cite{tan2017optimal}, in practice, energy storage agents might have more complex degradation costs~\cite{shi2018convex} and objective functions~\cite{xu2018optimal}. Furthermore, if the system operator does not know the exact storage utility function parametric form, using the quadratic storage model for identification might lead to a suboptimal solution due to potential model mismatches. 
In this section, we extend our method to more generic objective functions in \eqref{eq:agent_model2},
\begin{subequations}
\begin{align*}
    \min_{y}\; & \lambda^\top y  + u_{\theta}(y) \,,\\
    \text{s.t. } & Ay = b\,, Gy \leq h.
\end{align*}
\end{subequations}

\label{sec:scp}
A common approach to solve problem~\eqref{eq:agent_model2} is sequential convex programming~\cite{duchi2018}. Sequential convex programming is a local method for solving nonconvex problems that leverage convex optimization. The basic idea is to \emph{iteratively} form a convex approximation problem and optimize over the convex approximation problem~\eqref{eq:agent_model_approximate},
\begin{subequations}\label{eq:agent_model_approximate}
\begin{align}
    y^{(i)} = \min_{y}\; & \lambda^\top y  + \hat{u}^{(i)}_{\theta}(y) \,, \\
    \text{s.t. } &Ay = b, Gh \leq h, 
\end{align}
\end{subequations}
until convergence.
Here we approximate the cost function $u_\theta(\cdot)$ using the second-order Taylor approximation around the $i$-th solution $y^{(i)}$, 
\begin{align}\label{eq:approx_obj}
    \hat{u}^{(i)}_{\theta}(y) = u^{(i)}_{\theta}(y^{(i)}) + {q^{(i)}}^\top(y-y^{(i)}) 
     + (y-y^{(i)})^\top Q^{(i)}(y-y^{(i)}), 
\end{align}
where $q^{(i)} = \nabla_{y^{(i)}} u \in \mathbb{R}^{2T}$ and $Q^{(i)} = \nabla^2_{y^{(i)}} u \in \mathbb{R}^{2T \times 2T}$. 

Suppose we achieve the fixed point $y^\star = y^{(k)}$ at $k$-th iteration, with the approximated quadratic storage objective function and affine constraints, differentiating the solution with respect to the unknown agent parameters can be done as we show in subsection~\ref{subsect:Storage_optnet}.
The gradient of storage response forecast $y^\star$ with respect to parameters $A,b,G,h$ in affine constraints can be directly derived following~\eqref{eq:linear_system} in Section~\ref{subsect:Storage_optnet}. To obtain the gradient of storage response forecast with respect to the unknown parameters in the objective function, $\frac{\partial y^\star}{\partial \theta}$, we use the chain rule,
\begin{equation}\label{eq:derivative2}
    \frac{\partial y^\star}{\partial \theta} = \frac{\partial y^\star}{\partial \hat{q}}\frac{\partial \hat{q}}{\partial \theta} + \frac{\partial y^\star}{\partial \hat{Q}}\frac{\partial \hat{Q}}{\partial \theta}\,\,, 
\end{equation}
where $\hat{q} = \nabla_{y^\star} u \in \mathbb{R}^{2T}$, $\hat{Q} = \nabla^2_{y^\star} u \in \mathbb{R}^{2T \times 2T}$.


The training process for general energy storage agent models are summarized in Algorithm~\ref{alg:train2}. Specially, we model the unknown objective function $u_{\theta}$ via the input convex neural networks (ICNN)~\cite{amos2017icnn}, which guarantees the second order approximation $\hat{Q} \in \mathbb{R}^{2T \times 2T}$ is always a positive semi-definite matrix. Thus, the resulting second-order Taylor approximation $\hat{u}_\theta^{(i)}(y)$ \eqref{eq:approx_obj} is always a convex function.
In the forward pass, given $\theta^{(ite)}$ is the current neural network parameters at iteration $ite$, we iteratively form the convex approximation problem~\eqref{eq:agent_model_approximate} and optimize over the corresponding convex problem, to obtain the fixed point $y^{*}_{i}$ for training sample $i$.
In the backward pass, we can get derivatives $\frac{\partial y^\star}{\partial \Theta}$ by differentiating the convergent convex program~\eqref{eq:derivative2} and then updating parameters accordingly. 
\revise{In practice, directly solving the approximate convex problem has been found to work well and is often computationally efficient~\cite{duchi2018}. However, we acknowledge that there can be certain situations or applications where the use of convex approximations may not be the most suitable choice due to specific constraints or requirements. In the experimental section, we demonstrate that sequential convex approximation consistently converges within 20 iterations.}





\subsection{Convergence Analysis}
For a subset of quadratic storage models, the gradient-based approach (Algorithm \ref{alg:train}) has convergence guarantee.
Specially, consider the agent model only has equality constraints and the objective function with only the quadratic term $Q = \alpha I$ (unknown) and $q = 0$. In addition, we assume $A$ is known and focus on identifying the constraint parameters $b$. 
We first prove convergence of the agent model with equality constraints and discuss the local convergence for the inequality case.
The loss function in \eqref{eq:identification} can be concretely written as,
\begin{subequations}\label{eq:qp_equ_loss}
\begin{align}
    L(\alpha, b) &= \frac{1}{N} \sum_{i=1}^{N} ||\hat{y}_i - y_i ||_2^{2}\,, \\
    \text{where}  \quad  & \hat{y}_i = \argmin \;\lambda_i^\tr y + \frac{\alpha}{2} y^\tr y \,, \\
    & \quad \text{subject to} \; \, Ay = b \quad : (\nu_i)
\end{align}
\end{subequations}
and $\hat{y}_i$ is the solution to the equality-constrained problem, 

\begin{equation}\label{eq:sol_equality}
    \hat{y}_i = \frac{1}{\alpha} \left(A^\top (AA^\top)^{-1} A - I\right) \lambda_i + A^\top (AA^\top)^{-1} b.
\end{equation}

Thus, the loss function has the following form,
\begin{equation}\label{eq:loss_eq}
    L(\alpha, b) = \frac{1}{N}\sum_{i=1}^N \left\|\frac{1}{\alpha} k_{1i} + k_2 b - y_i\right\|^2 \,,
\end{equation}
where $k_{1i} = \left(A^\top (AA^\top)^{-1} A - I\right) \lambda_i$ and $k_2 = A^\top (AA^\top)^{-1}$. 
Theorem \ref{convergent_for_eq} shows that $L$ has a gradient dominance property w.r.t $\alpha, b$ which ensures the convergence of gradient update. 

\begin{theorem}\cite{bian2022demand}\label{convergent_for_eq}
For the equality-constrained quadratic agent model identification in~\eqref{eq:qp_equ_loss}, assume $\alpha > \delta > 0$, 
and for any initial value $\alpha_0, b_0 \in \mathbb{R}$, define a sublevel set $\G_{10\epsilon^{-1}} = \{\alpha, b \in \R | L(\alpha,b) - L(\alpha^\star,b^\star) \leq 10\epsilon^{-1}\Delta_0\}$, $\Delta_0 := L(\alpha_0,b_0) - L(\alpha^\star,b^\star)$ and $\epsilon > 0$ is any positive constant. $L(\alpha,b)$ has gradient dominance property and $\nabla L(\alpha,b)$ is Lipschitz bounded over $\G_{10\epsilon^{-1}}$. Therefore, using the gradient update $\alpha^{(t+1)} \leftarrow \alpha^{(t)} - \eta_\alpha \frac{\partial L}{\partial \alpha}\,, b^{(t+1)} \leftarrow b^{(t)} - \eta_b \frac{\partial L}{\partial b}$ guarantees,
$$\lim_{t \rightarrow \infty} \alpha^{(t)} \rightarrow \alpha^\star\,, \lim_{t \rightarrow \infty} b^{(t)} \rightarrow b^\star\,,$$ 
where $\alpha^\star, b^\star$ is the true model parameter that generates the training data $(\lambda_i, y_i), i=1, ..., N$.
\end{theorem}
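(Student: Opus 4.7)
The plan is to prove convergence via the Polyak--{\L}ojasiewicz (PL) recipe: establish a gradient-dominance inequality together with Lipschitz continuity of $\nabla L$ over the sublevel set $\G_{10\epsilon^{-1}}$, then invoke the standard convergence result for gradient descent under these two conditions. The key technical idea is the reparameterization $z := 1/\alpha$, under which~\eqref{eq:loss_eq} becomes
$$\tilde L(z,b) := \frac{1}{N}\sum_{i=1}^{N}\|z\, k_{1i} + k_2 b - y_i\|^2,$$
a convex quadratic in $(z,b)$. A pleasant structural observation is that $k_{1i} = -(I - A^\top (AA^\top)^{-1}A)\lambda_i$ lies in the null space of $A$, while $k_2 = A^\top(AA^\top)^{-1}$ has range equal to the row space of $A$, so the two coordinate blocks are orthogonal. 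The Hessian of $\tilde L$ is therefore block-diagonal, with $b$-block $2(AA^\top)^{-1} \succ 0$ and $z$-block $\tfrac{2}{N}\sum_i \|k_{1i}\|^2 > 0$ under the mild identifiability condition that some $\lambda_i$ has a component outside $\mathrm{row}(A)$. Hence $\tilde L$ is $\mu_H$-strongly convex in $(z,b)$, yielding the PL inequality $\|\nabla_{(z,b)}\tilde L\|^2 \geq 2\mu_H(\tilde L - L^\star)$.

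To lift this back to the native $(\alpha,b)$ coordinates, I would use the chain rule $\partial_\alpha L = -\alpha^{-2}\partial_z\tilde L$ and $\partial_b L = \partial_b\tilde L$. Using the assumed lower bound $\alpha > \delta$ together with strong convexity of $\tilde L$ (which forces $z$ to stay away from $0$ on $\G_{10\epsilon^{-1}}$ whenever the sublevel threshold is small relative to $(\mu_H/2)(z^\star)^2$), the next step is to argue that $\alpha$ is confined to a compact interval $[\delta, M]$ throughout the sublevel set. This yields gradient dominance $\|\nabla L\|^2 \geq c_1(L - L^\star)$ in the original coordinates with an explicit constant $c_1 = c_1(\mu_H, \delta, M) > 0$. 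For smoothness, a direct computation of $\nabla^2 L$ from~\eqref{eq:loss_eq} shows entries that are polynomials in $\alpha^{-1}$ (up to $\alpha^{-4}$), linear in $b$, and affine in the data; these are uniformly bounded on the compact set $[\delta, M]\times\{b:\|b-b^\star\|\leq R\}$ (with $R$ coming from strong convexity of $\tilde L$), giving a finite Lipschitz constant $L_\nabla$ for $\nabla L$ on $\G_{10\epsilon^{-1}}$.

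With PL and smoothness in place, the standard argument delivers linear decay $L^{(t)} - L^\star \leq (1-c_1/L_\nabla)^t \Delta_0$ for appropriately chosen step sizes $\eta_\alpha, \eta_b$, and the iterates never escape $\G_{10\epsilon^{-1}}$ because $L$ is monotone non-increasing along the trajectory. Strong convexity of $\tilde L$ then converts loss convergence into iterate convergence $(z^{(t)}, b^{(t)}) \to (z^\star, b^\star)$, hence $(\alpha^{(t)}, b^{(t)}) \to (\alpha^\star, b^\star)$ since the map $\alpha \mapsto 1/\alpha$ is a homeomorphism on $[\delta, M]$.

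The main obstacle I expect is controlling the upper bound $M$ on $\alpha$ uniformly over $\G_{10\epsilon^{-1}}$: the argument sketched above requires the sublevel threshold $10\epsilon^{-1}\Delta_0$ to be small compared to $(\mu_H/2)(z^\star)^2$, which amounts to a mild restriction on $\epsilon$ or on the distance of the initialization from $(\alpha^\star, b^\star)$. A secondary subtlety is the asymmetric scaling of the two step sizes: the $\alpha^{-2}$ factor in the chain rule makes the $\alpha$-gradient vanish as $\alpha$ grows, so the effective rate constants depend on the worst-case $\alpha$ along the trajectory, and tracking these constants to guarantee that both updates remain stable inside the sublevel set is bookkeeping-heavy but routine.
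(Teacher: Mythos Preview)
Your proposal is correct and follows essentially the same route as the paper: reparameterize via $z=1/\alpha$ to obtain a strongly convex quadratic $\tilde L(z,b)$, lift the resulting PL inequality back to $(\alpha,b)$ through the chain rule, use compactness of the sublevel set to bound both the Jacobian factor and the Lipschitz constant of $\nabla L$, and then invoke the standard PL $+$ smoothness linear-convergence result. Your orthogonality observation (that $k_{1i}\in\mathrm{null}(A)$ and $\mathrm{range}(k_2)=\mathrm{row}(A)$, giving a block-diagonal Hessian) and your explicit worry about the upper bound $M$ on $\alpha$ are refinements the paper does not spell out---it simply asserts strong convexity of $g$ and cites~\cite{furieri2020learning} for compactness of $\G_{10\epsilon^{-1}}$---but the skeleton of the argument is identical.
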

Proof of Theorem 1 can be found in Appendix~\ref{sec:proof} for completeness.  
\revise{To the best of our knowledge, convergence analysis of differentiable optimization problems is still an open problem. The convergence results in the equality-constrained case (Theorem~\ref{convergent_for_eq}) provide a first step in understanding this question. 
For inequality-constrained quadratic agent model
\begin{equation}
    \hat{y}_i  = \argmin \lambda_i^\tr y + \frac{\alpha}{2}y^\tr y, \,\text{subject to}\,\, Gy \leq h
\end{equation}
The standard perturbation theory~\cite{robinson1974perturbed} demonstrates that if $\alpha_0$ and $h_0$ are sufficiently close to $a$ and $h$, respectively, the active inequality constraints can be identified. This result can transform the general inequality-constrained problems into equality-constrained problems, for which we provide a local convergence guarantee. As previously shown in \cite{bian2022demand}, globally the loss function can contain multiple local minima, even for $h \in \mathbb{R}$ (e.g. when there is only one inequality constraint). Several approaches have been proposed in the literature to escape local optima, such as initializing with different points, stochastic gradient descent~\cite{kleinberg2018alternative}, and adding perturbations~\cite{feng2020dynamical}. 
To address this issue, we employ a random shooting approach in our experiments. We train the model with different initial parameters and choose the set with the lowest training loss, which yields good prediction performance.
}

\section{Case Study}\label{CS}
We consider three applications of the proposed framework: 1) storage model identification and response prediction with a quadratic degradation cost function, 2) storage model identification and response prediction with an SoC-dependent degradation cost function, 3) model identification and response prediction on a real-world price arbitrage storage. 
We compare our approach with existing data-driven approaches including feedforward neural networks and recurrent neural networks, and show our storage response behavior prediction significantly outperforms the existing methods, in both the simulated energy storage models and the real-world dataset.
Source code, input data and the trained models for all experiments are available on Github\footnote{\url{https://github.com/alwaysbyx/Predicting-Strategic-Energy-Storage-Behaviors}}.

\subsection{Quadratic Energy Storage Model}
\textbf{Experiment setting.} 
We start with storage arbitrage with a quadratic disutility function following the model~\eqref{eq:agent_battery1}. 
We consider a storage model with a power rating $\overline{P} = 0.5MW, \underline{P} = 0MW$, starting with 50\% SoC, and vary the energy storage duration $H$ from $\{1h, 2h, 3h, 4h\}$ to calculate $\{\underline{E_m},\overline{E_m},e_0\}$, i.e., when $H = 1h, \underline{E_m} = 0 MWh, \overline{E_m} = 0.5 MWh, e_0= 0.25MWh$. We conduct 10 experiments with different true parameter sets, that are sampled independently from uniform distributions according to
\begin{equation}
    c_1 \sim U[0,20]\quad c_2 \sim U[0,20] \quad \eta \sim U[0.8,1.0]
\end{equation}
For each true parameter set, we take $T = 24$ with training data $N = 20$ and test data $N =10$ using hourly averaged real-time price data. 
The price data is randomly selected by dates from 2019 New York City (Zonn-J of New York Independent System Operator) real-time price data~\cite{nyiso}. We report the average performance of the proposed methods and all baseline methods by averaging across the 10 experiments.

For the implementation of the proposed gradient-based identification and forecast methods, we consider two scenarios: i) the system operator has knowledge about the parametric form of the objective function is quadratic thus using Algorithm \ref{alg:train} (Ours: quadratic); 
ii) the system operator does not know the cost function form thus using the generic identification Algorithm \ref{alg:train2} (Ours: generic). 
We use ICNN to model the generic utility function $u_\theta(\cdot)$. 
For the implementation of baseline methods. We compared the test loss with two baselines: a four-layer multi-layer perception with ReLU activations (MLP) and a recurrent neural network (RNN). 
Details about the network architecture and hyperparameters are provided in Appendix \ref{sec:model_detail}. %


\textbf{Forecast results.} 
Parameter identification results are shown in Table~\ref{tab:storage} and the average mean absolute error for parameter identification compared to the true parameters are shown in Table~\ref{tab:storage_iden}. In the learning process, $\overline{P}, \underline{P}$ can be inferred from the dispatch data $\{d^i, p^i\}_{i=1}^N$ by finding the maximum/minimum value of $d_t, p_t$. For the quadratic identification algorithm, we can directly identify all the parameters using the proposed Algorithm~\ref{alg:train}. 
For the generic identification algorithm, we use ICNN $u_{\theta}(d_t)$ to model the dis-utility function, and we estimate $c_1, c_2$ using the first-order Taylor approximation and the second-order Taylor approximation around the solution $d$,
\begin{equation*}
    c_1 = \frac{1}{T}\sum_t \nabla_{d_t} u_{\theta}(d_t), c_2 = \frac{1}{T}\sum_t \nabla^2_{d_t} u_{\theta}(d_t)
\end{equation*}
Fig~\ref{fig:energy_val2} shows the test loss comparison. The MSE test error of our method with quadratic model and generic model, MLP, and RNN among ten experiments are $\boldsymbol{8.74 \mathrm{e}{-5}}$, $\boldsymbol{7.37 \mathrm{e}{-4}}$, $0.018$ and $0.017$, respectively.
We observe that the quadratic identification model can recover the exact parameters of the ground-truth storage model, thus obtaining perfect forecasting results even with unseen price signals, as shown later in Fig \ref{fig:energy_pred}. The generic approach has a higher  error compared to the quadratic approach but still significantly outperforms other baseline methods.


\begin{table}[tb]
    \footnotesize
    \centering
    \caption{Model parameter identification using one parameter set. 
    }
    \begin{tabular}{c c c c c c}
    \hline 
         & $c_1$ & $c_2$ &  $\overline{E_m}$ & $\underline{E_m}$ & $\eta$ \\\hline
      True  & 11.02 & 14.16 & 0.5 & -0.5 & 0.9000 \\
      Ours (quadratic)  &  11.02 & 14.16 & 0.5 & -0.5 & 0.9000  \\
      Ours (generic) &  10.72 & 17.39 & 0.5 & -0.5 & 0.8990 \\
      \hline
    \end{tabular}
    \label{tab:storage}
\end{table}

\begin{table}[tb]
    \footnotesize
    \centering
    \caption{Average MAE of parameters identified compared to the true parameters in 10 experiments.
    }
    \begin{tabular}{c c c c c c}
    \hline 
         & $c_1$ & $c_2$ &  $\overline{E_m}$ & $\underline{E_m}$ & $\eta$ \\\hline
      Ours (quadratic)  &  0.39 & 0.20 & 0.007 & 0.004 & 0.007  \\
      Ours (generic) &  3.25 & 3.70 & 0.021 & 0.037 & 0.022 \\
      \hline
    \end{tabular}
    \label{tab:storage_iden}
\end{table}

The learned model then can be used to predict the energy storage dispatch with given price data.
We present the forecast results of Algorithm \ref{alg:train} (Ours: quadratic), Algorithm \ref{alg:train2} (Ours: generic),  compared with baselines including RNN and MLP in Fig~\ref{fig:energy_pred}. Though we do not have complete information on the model, we can forecast how the storage will schedule charging/discharging operations accurately, especially compared with black-box data-driven methods. The baseline RNN and MLP models take 10 times more samples to train, but they are not to accurately forecast the storage response.
\begin{figure}[tb]
    \centering
    \includegraphics[width=0.95\columnwidth]{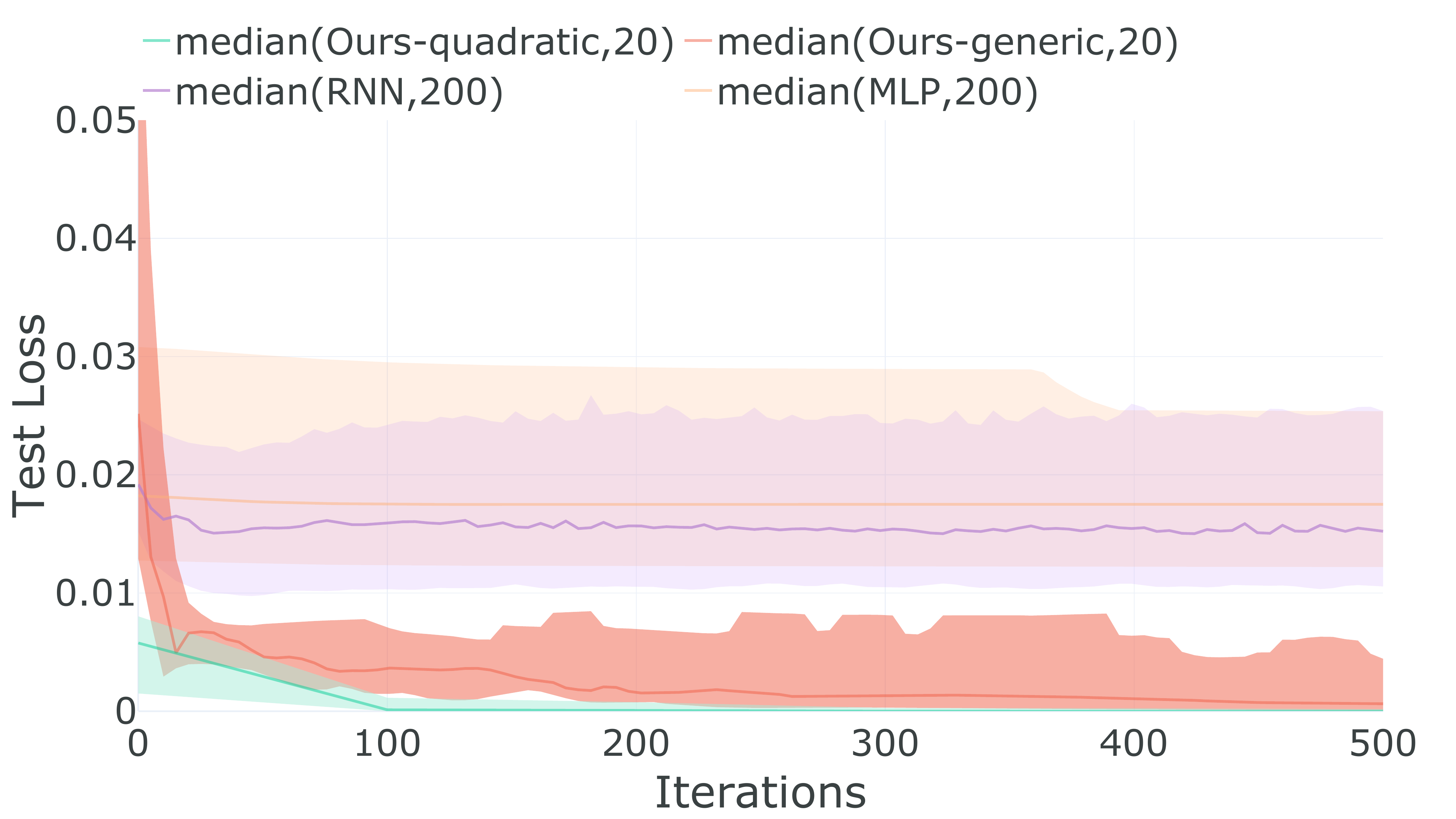}
    \caption{Comparison of test loss (MSE) versus iterations using our method and baselines, the solid line represents the median test loss of 10 experiments and shadow represents the 80\%/20\% quantile test loss of 10 experiments. Our method use 20 training samples and baselines use 200 training samples.}
    \label{fig:energy_val2}
\end{figure}
\begin{figure}[tb]
    \centering
    \includegraphics[width=0.95\columnwidth]{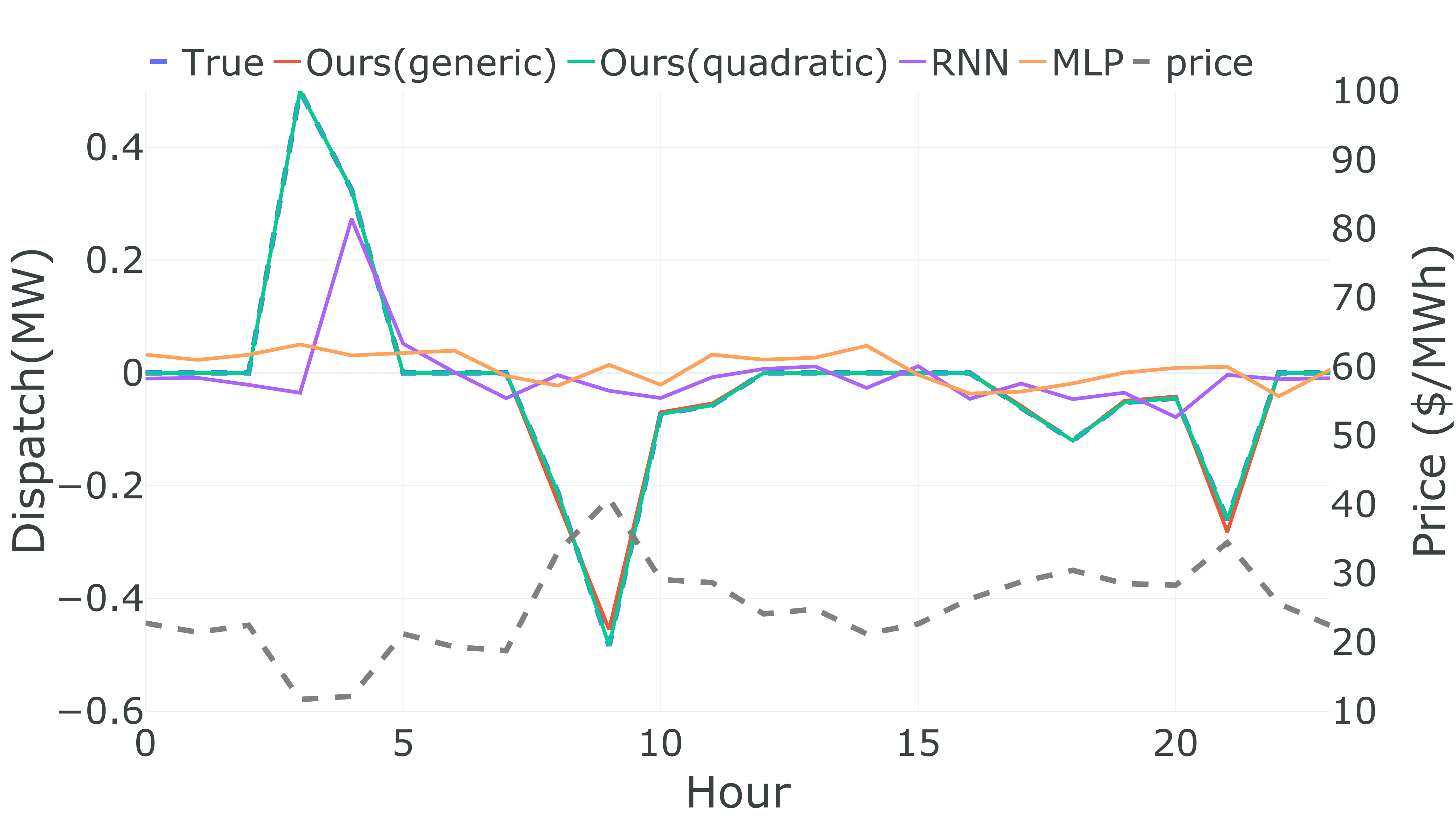}
    \caption{Comparison of energy storage dispatch using our methods and baselines (MLP and RNN). Our methods take 20 samples for training, while MLP and RNN take 200 samples for training. Figure shows a test example for one day. 
    Ours (quadratic) method identifies the true model and can predict the behavior exactly, thus the green line overlaps with the true response (blue).}
    \label{fig:energy_pred}
\end{figure}



\subsection{SoC-Dependent Energy Storage Model}
\textbf{Experiment setting.} 
Next, we consider storage arbitrage with a SoC-dependent degradation model in~\eqref{eq:agent_battery2}. We conduct 10 experiments with different true parameter sets. Same as the previous case, the storage power rating is set as $\overline{P} = 0.5MW, \underline{P} = 0MW$, starting with 50\% SoC, and vary the energy storage duration $H$ from $\{1h, 2h, 3h, 4h\}$ to calculate $\{\underline{E_m},\overline{E_m},e_0\}$. The ground-truth parameters are sampled independently from $c_1 \sim U[0,20], c_2 \sim U[0,20], \eta \sim U[0.8,1.0]$. We compare the performance of the quadratic storage identification algorithm (Algorithm \ref{alg:train}), the generic storage identification algorithm (Algorithm \ref{alg:train2}), and two data-driven baselines four-layer multi-layer perception with ReLU activations (MLP) and a recurrent neural
network (RNN). 
Details about the network architecture and hyperparameters are provided in Appendix \ref{sec:model_detail}.

\textbf{Convergence results}
As discussed in Section~\ref{sec:scp}, for generic agent models, we use the sequential convex programminng approach to sequentially approximate and solve the convex quadratic program. 
Once we reached the fixed point, we use backpropagation to compute the derivatives. 
Figure~\ref{fig:convergence} (left) shows the objective $c^{ite}(y^{(0)})$ for different iterations during the training process. It shows that the training loss keeps reducing as training iterations increase. In Figure~\ref{fig:convergence} (right), we show the convergence of the sequential convex approximation process, i.e., $c^{ite}(y^{(k)}) = \sum_{t=1}^T \lambda_t(p^{(k)}_t-d^{(k)}_t)+u_{\theta}(d^{(k)})$ for $k = 0, 1, \ldots, 19$ during each training iteration $it$, where $y^{(k)}$ is the solution to the approximated convex problem from iteration $k-1$. As parameters are updated during training, cost values are in different scales, thus we show the normalized $\frac{c^{ite}(y^{(k)})}{c^{ite}(y^{(0)})}$ for different iterations $ite$. From the figure, we see the fixed point can be obtained within 20 iterations.
\begin{figure}[tb]
    \centering
    \includegraphics[width=0.95\columnwidth]{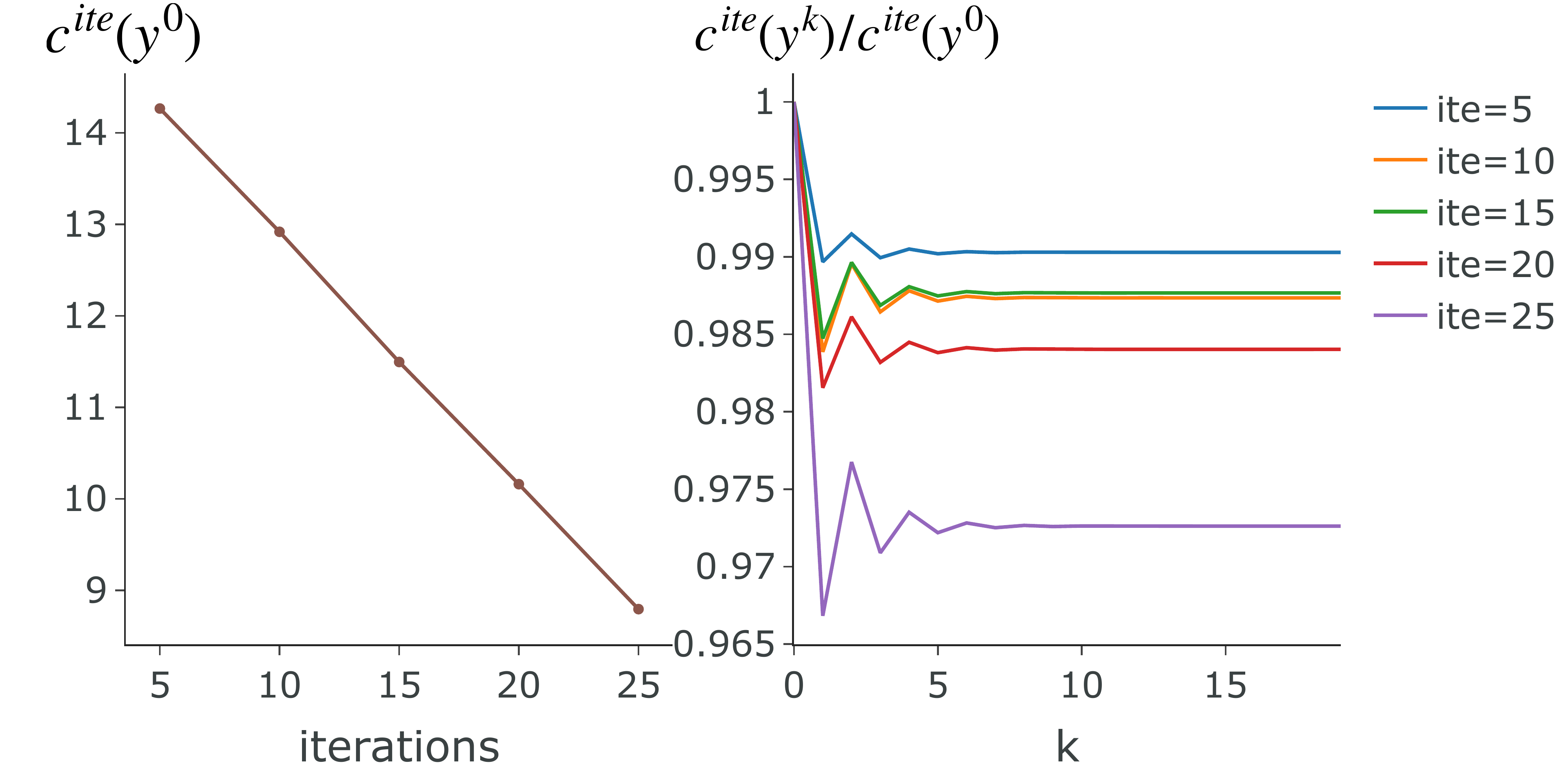}
    \caption{(Left) Cost objective during the training process. (Right) Convergence of the sequential convex programming reaches for 20 iterations.}
    \label{fig:convergence}
\end{figure}

\textbf{Forecast results.} 
Now we show the capability of predicting energy storage dispatch using different methods. We use $N=40$ training samples for our method, and $N=200$ training samples for baselines (MLP and RNN) since performance of these methods degrade significantly with fewer data. The test results are based on 10 days of unseen prices. The average test MSE loss of our method with Algorithm \ref{alg:train} (Ours: quadratic), Algorithm \ref{alg:train2} (Ours: generic), MLP and RNN among ten experiments are $0.032$, $\boldsymbol{0.016}$, $0.039$ and $0.039$. Our approach based on generic model Algorithm 2 can accurately predict the storage response, while the baseline data-driven models such as MLP and RNN are barely able to capture the response. Since the storage model is non-quadratic, the quadratic identification method Algorithm 1 is subject to a model mismatch and we observe a performance degradation compared to the previous case, while the testing error is lower than the pure data-driven baselines. This again highlights the importance of incorporating physical knowledge into black-box data-driven models for energy storage agent behavior forecasting.
In addition, we demonstrate the forecasting performance in one randomly selected test day, shown in Fig~\ref{fig:energy_pred2}.

\begin{figure}[tb]
    \centering
    \includegraphics[width=0.95\columnwidth]{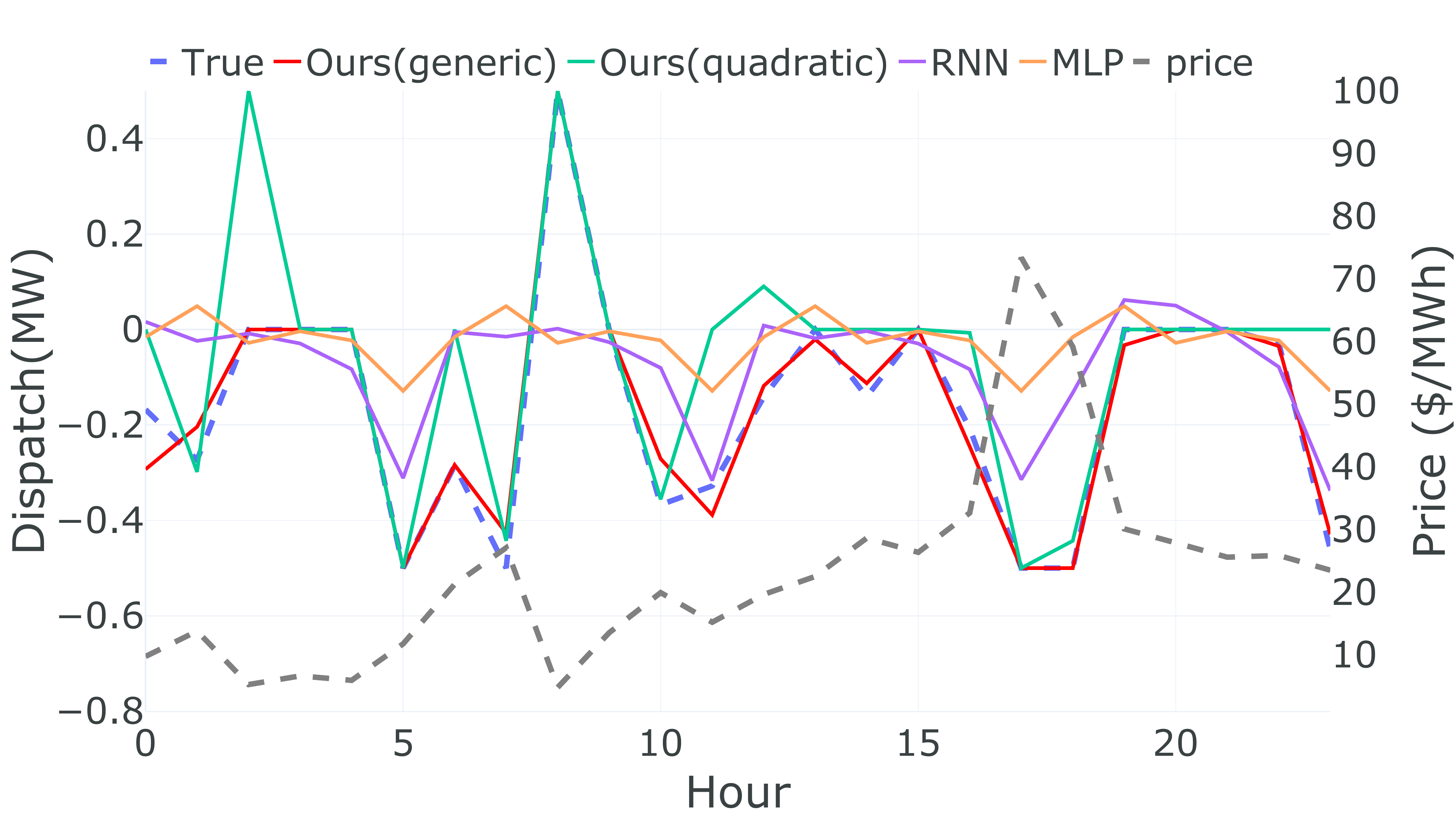}
    \caption{Comparison of energy storage dispatch using our methods and baselines (MLP and RNN). Figure shows a test example for one day. Positive values represent charging and negative values represent discharging. }
    \label{fig:energy_pred2}
\end{figure}


\subsection{Real-world Dataset: Queensland Battery Project}
We now test the performance of the proposed approach using a real-world storage arbitrage dataset from the University of Queensland Australia (UQ). UQ installed a 1.1~MW/2.22~MWh developed using Tesla Powerpack 2.5 to provide price arbitrage and operating reserves.
Contingency frequency events typically occur only a small number of times each year~\cite{UQreport}. Thus the storage charge and discharge behaviors are dominated by price arbitrage in real-time markets.

\textbf{Experiment setting.} We use data from 2020,
the original data consists of recordings of 5 minutes resolution between January 2020 and June 2020. We split the dataset into sequence samples lasting 40 hours with half hourly resolution (i.e. $T=80$) as the UQ project uses a model predictive control approach with up to 40 hours look-ahead~\cite{UQreport}.
We drop all the samples that include contingency frequency regulation events. We also notice that demand response engine does not perform charging/discharging in some dates, due to the manual intervention. We also drop those samples and the final dataset contains 42 samples (1680 hours). We use 22 samples for training and 20 samples for testing. We use exact storage capacity value from the project description with 
$\overline{P}=1.1 MW, \underline{P}=0 MW, \overline{E_m} = 2.22 MWh, \underline{E_m}=0 MWh $. 
\revise{We compare the performance of the quadratic
storage identification algorithm (Algorithm 1) and the generic
storage identification algorithm (Algorithm 2) with baselines including MLP, RNN and a threshold-based method. For threshold-based method, we use the training set to record the activated price ratio $r^p_i = \frac{\lambda_{it}^p}{\overline{\lambda_i}}, r^d_i = \frac{\lambda_{it}^d}{\overline{\lambda_i}}$, where $\overline{\lambda_i}$ is the average price of $i$-th sample, and $\lambda_{it}^p, \lambda_{it}^d$ is the activated price to charge or discharge for $i$-th sample at time $t$. We use the median $r^p$ and $r^d$ as the activated ratio for both the training set and test set. }

\textbf{Forecast results.}
\begin{figure}[tb]
    \centering
    \includegraphics[width=0.9\columnwidth]{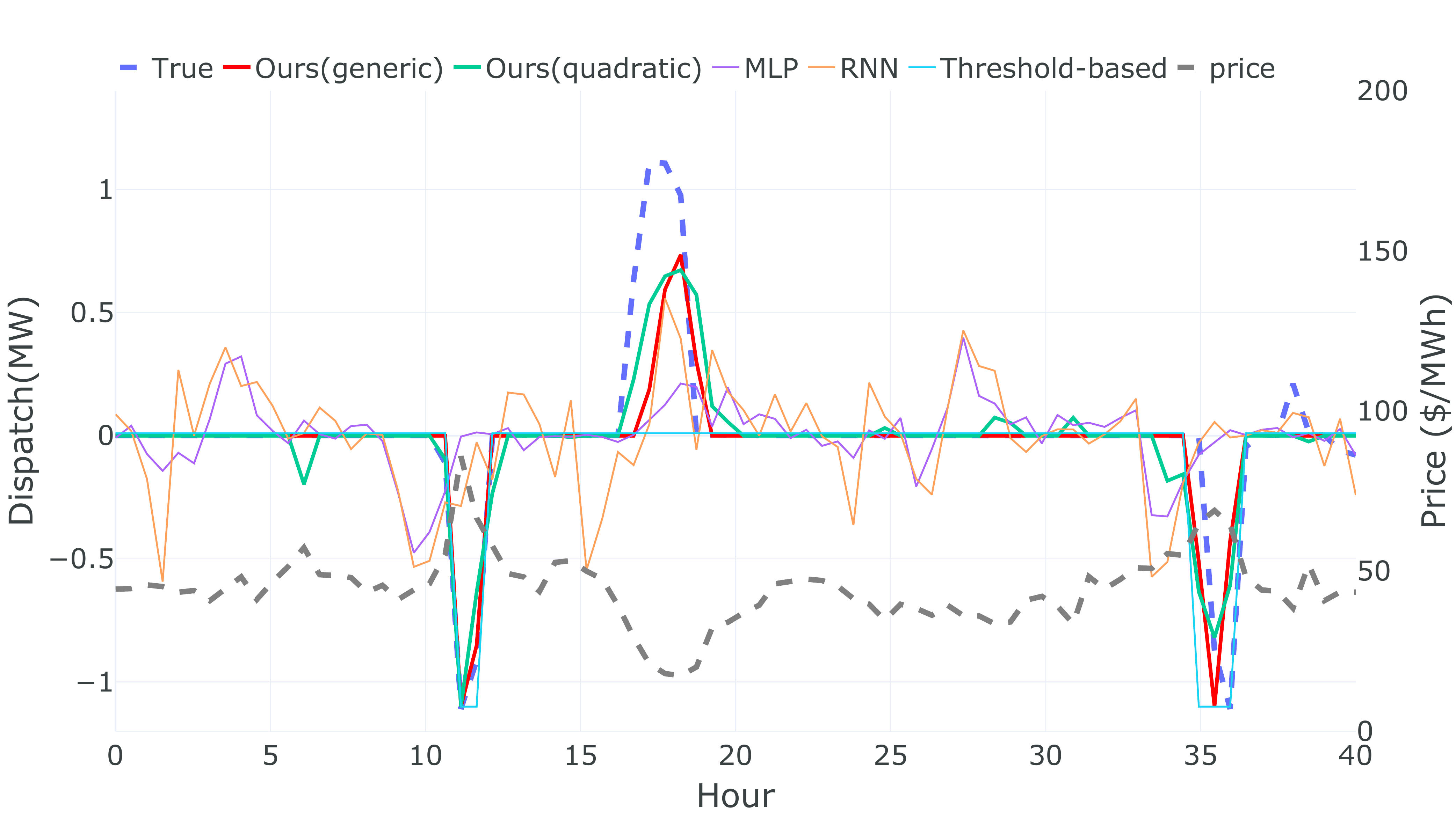}
     \includegraphics[width=0.9\columnwidth]{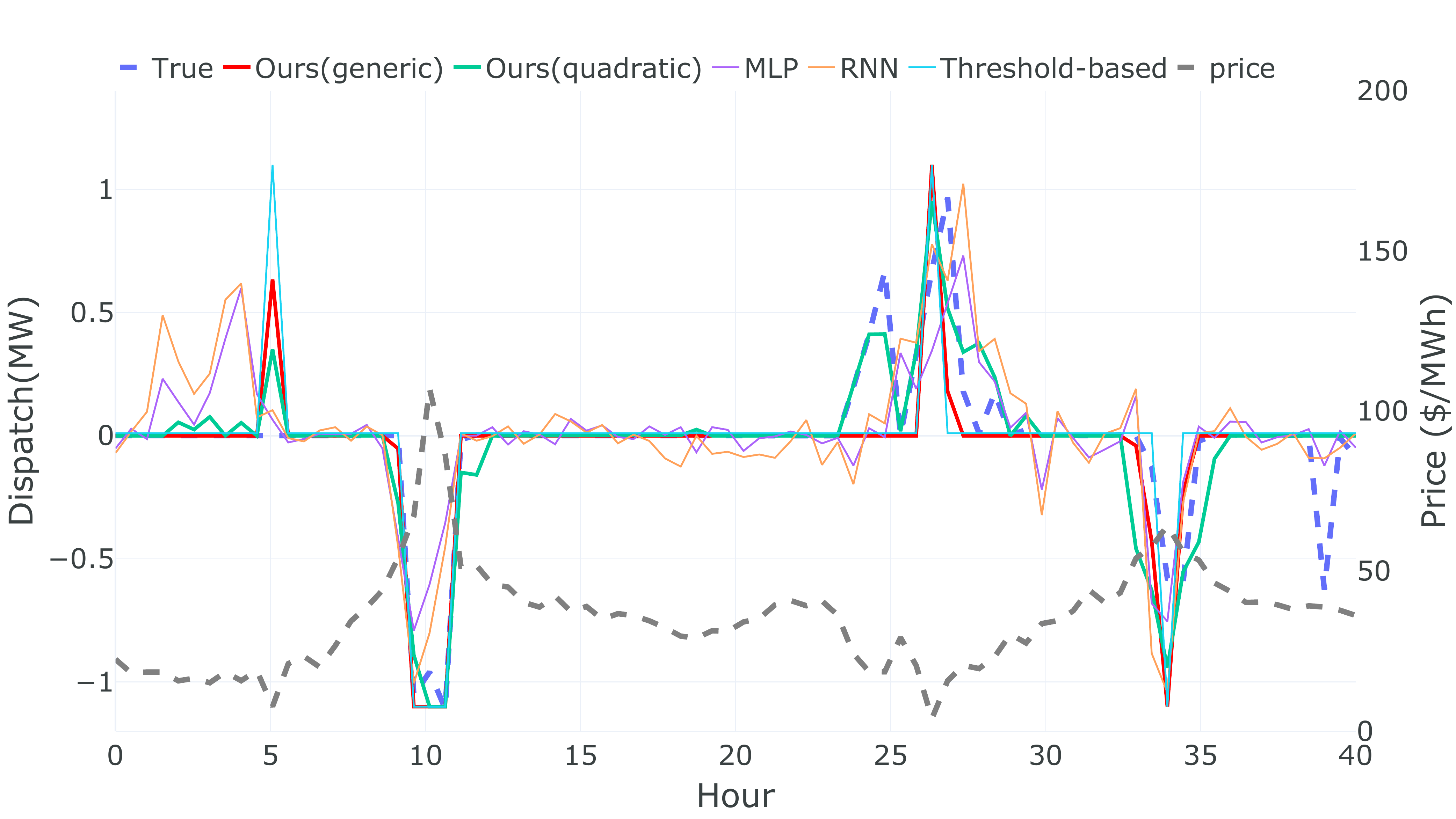}
    \caption{\revise{Comparison of energy storage dispatch using our methods and baselines (RNN, MLP, Threshold-based). The figure shows two examples of 40 hours' length. Positive values represent charging and negative values represent discharging.}}
    \label{fig:tesla}
    \vspace{-6pt}
\end{figure}
Fig~\ref{fig:tesla} shows the forecast results on two test samples. Table~\ref{tab:real} shows training and testing MSE and correlation scores between the forecasted storage behavior and the actual storage behavior in test dataset. We compute correlation coefficient of sequence $x$ and $y$ using $$corr(x,y) = \frac{\sum (x_i-\bar{x})(y_i-\bar{y})}{\sqrt{\sum(x_i-\bar{x})^2\sum (y_i-\bar{y})^2}}$$
\revise{Our experiment results demonstrate that both of our proposed methods outperform the baselines in terms of test MSE and correlation. Figure \ref{fig:tesla} shows that the MLP and RNN baselines (purple and orange lines) are unable to capture future storage behaviors (blue dashed line), while the threshold-based method is better but fails to accurately predict the dispatch behavior driven by optimization. Our proposed methods (red and green lines) generally perform well in capturing future storage behaviors (blue line). However, the quadratic model provides better testing accuracy than the generic model. This may be due to the fact that the University of Queensland project uses a quadratic programming algorithm to optimize storage operation, making the quadratic identification model a better fit. The UQ project report~\cite{UQreport} acknowledges mismatches between actual and simulated optimal battery responses, as well as differences between the actual and simulated prices used in the optimization. Despite these challenges, our data-driven approach can still accurately forecast energy storage behaviors and outperforms existing approaches.}
\vspace{-6pt}
\begin{table}[htbp]
    \footnotesize
    \centering
    \caption{\revise{Mean square error (MSE) and correlation score.}}
    \begin{tabular}{c c c c}
    \hline 
      & \revise{Training MSE} & \revise{Test MSE} & \revise{Test Correlation}\\
     \hline
     \revise{Ours(quadratic)} & \revise{\textbf{0.044}} & \revise{\textbf{0.042}} &  \revise{\textbf{0.74}}\\
     \revise{Ours(generic)} & \revise{0.054} &   \revise{0.053} &  \revise{0.68} \\
    \revise{Threshold-based} &  \revise{0.088} &  \revise{0.070} & \revise{0.57} \\
     \revise{MLP} & \revise{0.047} & \revise{0.096} & \revise{0.34} \\
     \revise{RNN} & \revise{0.092} & \revise{0.109} & \revise{0.27} \\
      \hline
    \end{tabular}
    \label{tab:real}
\end{table}
\vspace{-6pt}
\revise{\subsection{Gradient-based approach v.s. optimization tools for solving the bi-level optimization}}
\revise{
To solve the bi-level optimization in~\eqref{eq:identification}, an alternative approach is using the KKT conditions for the low-level problem~\eqref{eq:agent_obj}~\eqref{eq:agent_battery_contr1}, \eqref{eq:storage_contr2_reformulation} and therefore we can formulate a single-level optimization and use optimization tools to solve it.}
\revise{Table~\ref{tab:compare_with_opt} shows the comparison of our method and optimization tool on the quadratic storage model. All the experiments are conducted at CPU Intel(R) Core(TM) i7-9750H CPU @ 2.60GHz. We compare the nonlinear Gurobi solver~\cite{gurobi} and the Baron solver~\cite{ryoo1995global}. As Gurobi demonstrated better performance, we present the results obtained with Gurobi solver. The code is also available at \href{https://github.com/alwaysbyx/Predicting-Strategic-Energy-Storage-Behaviors/blob/main/quadraticenergystorage/optimization.m}{Github Link}. The optimization tool is only able to handle small datasets and is unable to find a solution for larger datasets (N=20) within a reasonable timeframe (1 hour). In contrast, the proposed gradient method is able to handle larger datasets and achieves better test accuracy when dealing with the same amount of data. The results demonstrate that the proposed gradient-based method outperforms the optimization method in terms of both computation efficiency and prediction accuracy. Additionally, Table~\ref{tab:compare_robust} demonstrates that the gradient-based method maintains its performance across different levels of noise, while the optimization approach exhibits a significant decline in performance as the noise increases.}


\begin{table}[htbp]
\caption{\revise{Comparison of Gradient-based method(gradient) and Single-level optimization(optimization): Time and Test mean-square error (MSE). }}
\centering
\begin{tabular}{ccccc}\hline
Training size              & Method       & $N=1$               & $N=5$                 & $N=20$                \\\hline
\multirow{2}{*}{Time(s)}   & Gradient     & 28.72  & 30.24    & 132.4    \\
                           & Optimization & 600            & 3600         & -                 \\\hline
\multirow{2}{*}{Test MSE } & Gradient     & 4.4$\times 10^{-4}$ & 1.8$\times 10^{-4}$   & 3.6$\times 10^{-5}$   \\
& Optimization & 48.4$\times 10^{-4}$          & 210.2$\times 10^{-4}$ & - \\\hline
\label{tab:compare_with_opt}
\end{tabular}
\end{table}
\begin{table}[htbp]
\caption{\revise{Comparison of Gradient-based method(gradient) and Single-level optimization(optimization) with different level of noise using $N=1$ training data, $\mathcal{N}$ is Gaussian distribution.}}
\centering
\begin{tabular}{ccp{1cm}p{1cm}p{1cm}}\hline
Training size              & Method       & No noise               & Noise $\sim \mathcal{N}(0,0.001)$              & Noise $\sim \mathcal{N}(0, 0.05)$                \\\hline
\multirow{2}{*}{Test MSE} & Gradient     & 4.4$\times 10^{-4}$ & 4.5$\times 10^{-4}$   & 4.8$\times 10^{-4}$   \\
& Optimization & 48.4$\times 10^{-4}$                & 87.8$\times 10^{-4}$  & 192.1$\times 10^{-4}$  \\\hline
\label{tab:compare_robust}
\end{tabular}
\end{table}

\section{Conclusions and Future Works}\label{Conclu}
In this paper, we proposed a novel gradient-based approach that combines data-driven methods and model-based optimization for the identification and prediction of strategic energy storage behaviors. 
We provide three case studies to validate the performance of the proposed approach: a storage model with a convex quadratic function, a storage model with a generic SoC-dependent degradation function, and a real-world energy storage demand response engine from the University of Queensland battery project. The experiment results show our method can identify the model well and predict the strategic storage behavior accurately, which significantly outperforms state-of-the-art data-driven baselines. \revise{The proposed algorithm can be applied to demand response behaviors forecast for building energy systems~\cite{fernandez2021forecasting,tang2019model} aggregated demand response participants~\cite{yuan2022data, tan2023data} and other potential distributed energy resources that can be formulated as optimization problems with affine constraints and generic convex cost functions. 
Interesting future directions include extending the proposed
algorithm for behavior modeling and forecasting of multiple strategic storage units that are price-makers.  Another interesting future direction is market tariff design based on the identified agent behavior model.}


\bibliographystyle{IEEEtran}
\bibliography{reference}

\appendix 
\subsection{Proof}\label{sec:proof}
\begin{proof}[Proof of Theorem \ref{convergent_for_eq}]\label{proof_dominance}
Let $h(x) = \frac{1}{x}$, by assumption $\alpha > \delta > 0$, thus $h(\alpha) = \frac{1}{\alpha} \in (0,\frac{1}{\delta})$. Let $q = \begin{bmatrix} x, b \end{bmatrix}^\tr$, and 
\begin{equation}
    L(\alpha, b) = \frac{1}{N}\sum_{i=1}^N \left\|\frac{1}{\alpha}k_{1i} + k_2 b - y_i\right\|_2^2, 
\end{equation}
\begin{equation}
    g(q) = g(x, b) := L(h(x), b) = \frac{1}{N}\sum_{i=1}^N \|xk_{1i} + k_2 b -y_i\|_2^2. 
\end{equation}

Since $g$ is strongly convex, there exists $\mu > 0$ such that,
\begin{equation}\label{eq:strong_convex}
    g(q') \geq g(q) + \nabla g(q)^\top (q'-q) + \frac{\mu}{2} \|q' - q\|^2\,,
\end{equation}
for all $q', q$. Taking minimization respect to $q'$ on both sides of Eq~\eqref{eq:strong_convex} yields
$g(q^\star) \geq g(q) - \frac{1}{2\mu} \|\nabla g(q)\|^2\,,$
where $q^\star$ denotes the unique global minimal point of $g$.  
Re-arranging it we have,
$2\mu(g(q) - g(q^\star)) \leq \|\nabla g(q)\|_2^2, \quad \forall q \in \mathbb{R}^2$
where $\mu$ is the strong-convexity constant of $g$. 

As $h$ is a continuous and one-to-one mapping, we have $L(\alpha, b) = g(x,b), \forall q$.
Let $\J_h: \R \rightarrow \R$ denote the Jacobian function of $h(\cdot)$. Applying the derivative chain-rule leads to
$$
\begin{aligned}
2\mu(L(\alpha,b)-L(\alpha^\star,b^\star)) &\leq \|\nabla g(h^{-1}(\alpha),b)\|_2^2 \\
& = \|\J_h(h^{-1}(\alpha))\nabla L(h(h^{-1}(\alpha)),b)\|_2^2 \\
& \leq \|\J_h(h^{-1}(\alpha))\|_2^2\|\nabla L(\alpha,b)\|_2^2 \\
& \leq \|\J_h(h^{-1}(\alpha))\|_F^2\|\nabla L(\alpha,b)\|_2^2, \forall \alpha,b.
\end{aligned}
$$

Observe that Jacobian $\J_h(h^{-1}(\alpha))$ is a polynomial for every $\alpha \in \R$. Then $\J_h(h^{-1}(\alpha))$ is bounded on any compact set. As shown in~\cite{furieri2020learning}, the sublevel set $\G_{10\epsilon^{-1}} = \{\alpha,b \in \R | L(\alpha,b) - L(\alpha^\star,b^\star) \leq 10\epsilon^{-1}\Delta_0\}$ is compact because $h$ is a continuous and bounded map, and $g$ is strongly convex. We denote
$$\tau = \sup_{\alpha\in \G_{10\epsilon^{-1}}} \|\J_h(h^{-1}(\alpha))\|^2_F$$
which is bounded constant. By setting $\mu_\epsilon = \frac{2\mu}{\tau}$, we obtain
\begin{equation}\label{eq:PL_inequality_alpha}
    \mu_\epsilon(L(\alpha,b)-L(\alpha^\star,b^\star)) \leq \|\nabla L(\alpha,b)\|_2^2, \forall \alpha,b \in \G_{10\epsilon^{-1}}
\end{equation}
where $\mu_{\epsilon} > 0$ is the gradient dominance constant over $\G_{10\epsilon^{-1}}$, for any $\epsilon > 0$. This inequality shows that the gradient grows faster than a quadratic function when $L(\alpha,b)$ moves away from the optimal function value. In addition, 
$$\nabla_{\alpha} L(\alpha,b) = \frac{1}{N}\sum_{i=1}^{N} -\frac{2}{\alpha^3} k_{1, i}^\top k_{1, i} - \frac{2}{\alpha^2} k_{1, i}^\top k_{2} b + \frac{2}{\alpha^2} k_{1, i}^\top y_i$$
$$\nabla_{b} L(\alpha,b) = \frac{1}{N}\sum_{i=1}^N (\frac{2}{\alpha}k_{1i} + bk_2 - y_i)^\tr k_2$$
has a Lipschitz constant $L$ and
$L$ is finite since $\G_{10\epsilon^{-1}}$ is compact. With the gradient dominance property in \eqref{eq:PL_inequality_alpha}, and the L-Lipschitz continuity of $\nabla L(\alpha,b)$, the following gradient method with a step-size of $1/L$,
$$\alpha^{(t+1)} = \alpha^{(t)} - \frac{1}{L} \nabla_\alpha L(\alpha^{(t)},b ^{(t)})\,,
b^{(t+1)} = b^{(t)} - \frac{1}{L} \nabla_b L(\alpha^{(t)}, b^{(t)})\,,$$
has a global linear convergence rate over $\G_{10\epsilon^{-1}}$,
$$L(\alpha^{(t)},b^{(t)}) - L(\alpha^\star,b^\star) \leq (1-\frac{\mu_{\epsilon}}{2L})^t (L(\alpha_0,b_0) - L(\alpha^\star,b^\star)).$$
The linear convergence of gradient descent under gradient dominance properties was first proved by \cite{polyak1963gradient} and recently used in various of nonconvex optimization problems~\cite{karimi2016linear,furieri2020learning}.
In addition, since $g(q)$ admits an unique optimizer $q^\star$, and correspondingly, $L(\alpha,b)$ has an unique minimizer $\alpha^\star,b^\star$. Thus $\lim_{t \rightarrow \infty}\alpha^{(t)} \rightarrow \alpha^\star, \lim_{t \rightarrow \infty}b^{(t)} \rightarrow b^\star$.
\end{proof}

\subsection{Model Details}\label{sec:model_detail}
All experiments: We use Adam optimizer with learning rate $0.01$.
\begin{outline}
 \1 Batch size: 32 for MLP and RNN; data size for quadratic model and generic model.
 \1 Baseline Models
     \2 \underline{MLP}: four layers with 64 hidden units in each layer.
     \2 \underline{RNN}: four LSTM layers with one fully-connected layer.
 \1 Architecture of input convex neural network (ICNN): As Fig~\ref{fig:icnn} shows, we use 
 two hidden layers with 24 hidden units, softplus~\cite{zheng2015improving} as an activation function.
\end{outline}



\begin{figure}[h]
    \centering
    \includegraphics[width=0.65\columnwidth]{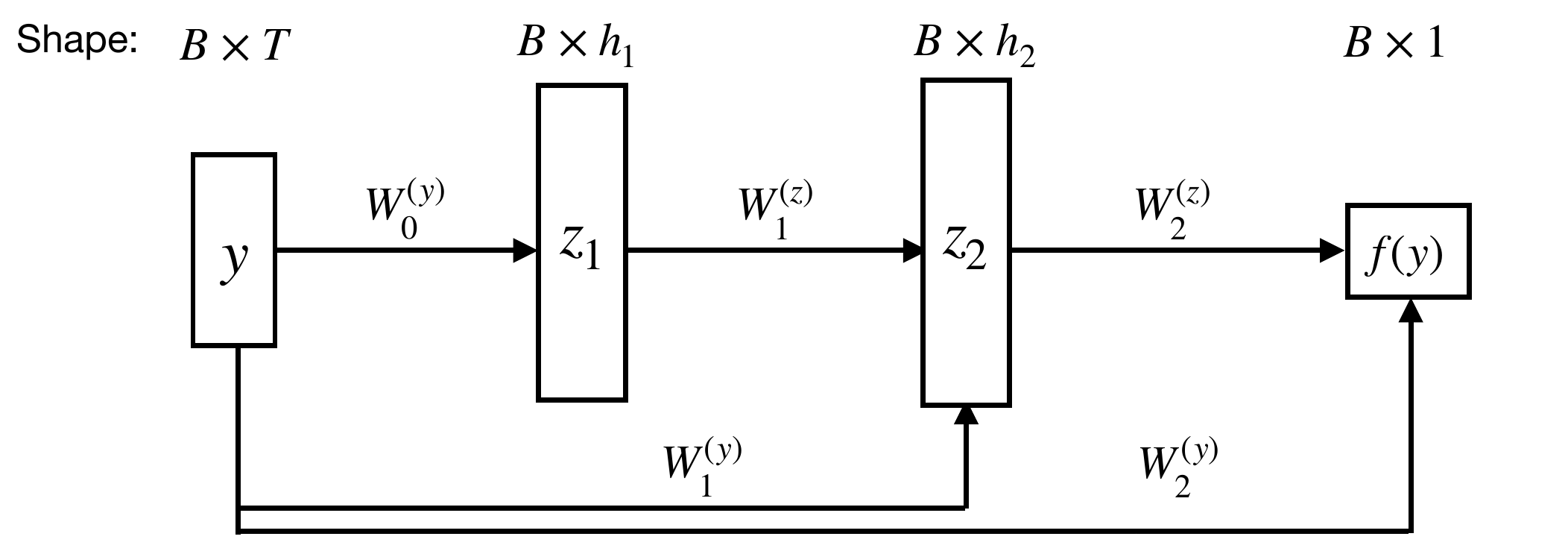}
    \caption{Architecture of ICNN. $B$ denotes the batch size, and $T$ is the input dimension, the output is a scalar where the dimension is 1.}
    \label{fig:icnn}
\end{figure}

\end{document}